\newtheorem{theorem}{Theorem}
\newtheorem{lemma}{Lemma}
\newtheorem{definition}{Definition}
\def\BibTeX{{\rm B\kern-.05em{\sc i\kern-.025em b}\kern-.08em
    T\kern-.1667em\lower.7ex\hbox{E}\kern-.125emX}}
\begin{document}

\title{A Secure Remote Password Protocol From The Learning With Errors Problem\\
}

\author{\IEEEauthorblockN{1\textsuperscript{st} Huapeng Li}
\IEEEauthorblockA{\textit{College of Information} \\
\textit{North China University of Technology}\\
Beijing, China \\
bwvanh114@outlook.com}
\and
\IEEEauthorblockN{2\textsuperscript{nd} Baocheng Wang}
\IEEEauthorblockA{\textit{College of Information} \\
\textit{North China University of Technology}\\
Beijing, China \\
wbaocheng@ncut.edu.cn}
}

\maketitle

\begin{abstract}
Secure Remote Password (\textsf{\textmd{SRP}}) protocol is an important password-authenticated key exchange (\textsf{\textmd{PAKE}}) protocol based on the discrete logarithm problem (\textsf{\textmd{DLP}}). The protocol is specifically designed to work for parties to obtain a session key and it is widely used in various scenarios due to its attractive security features. As an augmented \textsf{\textmd{PAKE}} protocol, the server does not store password-equivalent data. And this makes attackers who manage to steal the server data cannot masquerade as the client unless performing a brute force search for the password. However, the advance in quantum computing has potentially made classic \textsf{\textmd{DLP}}-based public-key cryptography schemes not secure, including \textsf{\textmd{SRP}} protocol. So it is significant to design a new protocol resisting quantum attacks. In this paper, based on the basic protocol, we build a post-quantum \textsf{\textmd{SRP}} protocol from the learning with errors (\textsf{\textmd{LWE}}) problem. Besides the resistance to known quantum attacks, it maintains the various secure qualities of the original protocol.
\end{abstract}

\renewcommand{\thefootnote}{}
\footnotetext{This paper is a corrected version of the previously published paper (\href{https://ieeexplore.ieee.org/document/10063693}{DOI: 10.1109/TrustCom56396.2022.00149}). The changes primarily involve: rectifying some symbol errors in the original sections, supplementing the missing references with corresponding descriptions, and correcting some formatting and spelling mistakes.}

\begin{IEEEkeywords}
Secure Remote Password Protocol, Password-Authenticated Key Exchange, Learning with Errors.
\end{IEEEkeywords}

\section{Introduction}
Wu TD. proposed the \textsf{\textmd{Secure Remote Password}} (\textsf{\textmd{SRP}}) protocol \cite{srp0} in 1998. The \textsf{\textmd{SRP}} protocol is an augmented password-authenticated key exchange (\textsf{\textmd{PAKE}}) protocol. In \textsf{\textmd{PAKE}} protocols, an eavesdropper cannot obtain enough information to get the password by brute force or a dictionary attack unless further information can be obtained from the corrupted parties. Clients or servers using \textsf{\textmd{SRP}} do not transmit passwords in clear text or encrypted form over the network, thus it eliminates password spoofing completely and that is why the password cannot be derived by an eavesdropper. On the other hand, the server needs to know the necessary information about the password (but not the password itself, e.g., a “verifier”) to build a secure connection, which guarantee the validity of identities. In general, using \textsf{\textmd{SRP}} has the following benefits for both parties: 1). \textsf{\textmd{SRP}} resists “Password Sniffing” attacks. In a session using \textsf{\textmd{SRP}} authentication, the listener will not be able to monitor any passwords transmitted over the network; 2). \textsf{\textmd{SRP}} resists dictionary attacks. In the process of generating a verifier, the protocol has already carried out password security processing. The algorithm used requires the attacker to perform an impossibly computing before a powerful attack can be made. In this case, even if an attacker tracks the entire session and compares the entire message to a regular password in a dictionary, he cannot obtain any available information associated with the password; 3). \textsf{\textmd{SRP}} resists attacks on passwords. Even if a very weak password is active, it is still too hard for an intruder to crack; 4). \textsf{\textmd{SRP}} is simple to implement because there is no key distribution center or authentication server needed in the protocol. Due to its security and ease of deployment, the \textsf{\textmd{SRP}} protocol can also be used in scenarios requiring identity authentication in cloud computing, mobile network, Internet of Things and other technology domains (e.g., \cite{z1zhao2021novel}, \cite{z2peng2021secure}).

However, at present, the existing key exchange protocols (including \textsf{\textmd{SRP}}) are generally based on the discrete logarithm problem (\textsf{\textmd{DLP}}), the integer factorization problem (\textsf{\textmd{IFP}}), and so on. These difficult problems were once considered intractable. However, the design of quantum computers and the proposal of related algorithms (e.g., Shor’s algorithm \cite{shor1999polynomial}) break the security of the schemes constructed based on the above problems. Therefore, designing cryptographic structures that can resist quantum attacks has become a focus. Fortunately, several classes of quantum-resistant cryptosystems have been devised \cite{Introduction}. Among them, the lattice-based cryptosystem is flexible in design, high in computational efficiency, and it can construct almost any primitive in cryptography. The first lattice-based cryptosystem is the Ajtai-Dwork cryptosystem proposed in 1997 \cite{Apublic-key}. And the security of the system is based on Ajtai’s average-case to worst-case reduction and it is proved the shortest vector problem (\textsf{\textmd{SVP}}) is \textsf{\textmd{NP}}-hard \cite{svp}. Since then, lattice-based cryptography thrives, a series of difficult problems on lattices have been discovered, and a large number of cryptographic structures based on these problems have been proposed \cite{Adecadeof-decade}. Among many lattice-based cryptographic constructions, the learning with errors (\textsf{\textmd{LWE}}) problem introduces “errors” (or “noise”) as random disturbances that mask real information and is an extremely flexible construction idea. The \textsf{\textmd{LWE}} problem can be reduced to the closest vector problem (\textsf{\textmd{CVP}}) \cite{cvp}. The \textsf{\textmd{CVP}} problem can be reduced to the \textsf{\textmd{SVP}} problem \cite{Approximatingcsvphard} and there exists no efficient quantum computing algorithm to solve it. The module \textsf{\textmd{LWE}} problem (\textsf{\textmd{MLWE}}), the ring \textsf{\textmd{LWE}} problem (\textsf{\textmd{RLWE}}), and other \textsf{\textmd{LWE}}-like problems \cite{mlwe-Algebraically} were subsequently proposed. Unlike the \textsf{\textmd{LWE}} problem, these problems require special algebraic structures and the security of the cryptosystems based on these needs more considerations.

Therefore, to construct a simple quantum-resistant \textsf{\textmd{SRP}} protocol that is not dependent on a specific algebraic structure is the motivation of this paper. To solve this problem, we first design an extended version - the multi-instance \textsf{\textmd{LWE}} problem. And based on the original \textsf{\textmd{SRP}} process \cite{srp0}, we generalize Ding’s key exchange scheme \cite{ding-Asimpleprovably}  to the extended multi-instance \textsf{\textmd{LWE}} structure, building an \textsf{\textmd{LWE}}-based \textsf{\textmd{SRP}} scheme inspired by the design thoughts of quantum secure \textsf{\textmd{SRP}} based on \textsf{\textmd{RLWE}} difficulty problem \cite{rlwesrp}. The scheme only relies on the standard \textsf{\textmd{LWE}} problem to achieve a simple and generic problem structure. In addition, we give a formal proof that is not considered in the original \textsf{\textmd{SRP}} scheme. The protocol maintains various security features of the original design and resists known quantum computing attacks due to the backbone structure based on the hard lattice problem.

\subsection{Organization}
The organization of the remainder of this paper is organized as follows: In Section \ref{Preliminaries} 
we present notations and  definitions used throughout the paper; In Section \ref{LWE PKE} we review 
\textsf{\textmd{LWE}}-based encryption schemes briefly; In Section \ref{SRP} we review 
the original \textsf{\textmd{SRP}} protocol; In Section \ref{LWE PKE} we describe our \textsf{\textmd{LWE}}-based 
\textsf{\textmd{SRP}} scheme; Finally, in Section \ref{Conclusions} we give conclusions of the paper.

\section{Preliminaries}\label{Preliminaries}

In this section, we present related notations, definitions, and theorems. Some of them are from previous works.

\subsection{Lattice and the Learning with Errors Problem}

\begin{definition}{\rm (Lattice).}
	Given a set of linearly independent vectors in $\mathbb{R}^n$: $\{ \boldsymbol{b}_1, \boldsymbol{b}_2, \ldots, \boldsymbol{b}_m \} \in \mathbb{R}^{n \times m}$, a lattice $L(\mathbf{B}) \in \mathbb{R}^n$ spawned by $\{ \boldsymbol{b}_1, \boldsymbol{b}_2, \ldots, \boldsymbol{b}_m \}$ is defined as:
	\begin{center}
    \begin{equation}
        	\mathcal{L}(\mathbf{B})=\left\{\sum\limits_{i=1}^{m} \boldsymbol{b}_{i} x_{i}: x_{i} \in \mathbb{Z}\right\},
    \end{equation}
	
	\end{center}
	where $n$ and $m$ are integers denoting the dimension and rank of $L(\boldsymbol{B})$. And the matrix $\boldsymbol{B} = \{ \boldsymbol{b}_1, \boldsymbol{b}_2, \ldots, \boldsymbol{b}_m \}$ is the basis of the generated lattice.
\end{definition}


%

\begin{definition}{\rm (Normal distribution\cite{regev2009lattices-Onlattices}).}
	The normal distribution with mean $\mu = 0$, variance denoted by $\sigma^2$ is a probability
distribution on $\mathbb{R}$ and the probability density function is as follows:

	\begin{center}
    \begin{equation}
       	\frac{1}{\sqrt{2 \pi} \cdot \sigma} \exp \left(-\frac{1}{2}\left(\frac{x}{\sigma}\right)^{2}\right).
    \end{equation}

	\end{center}

Let $\rho_s(x) = \exp\left(-\pi \frac{\|x\|^2}{s^2}\right)$ be a normal distribution function with the parameter $s > 0$ and $x$ denotes a vector. Then we have an $n$-dimensional joint normal distribution function:
\begin{center}
\begin{equation}
      \int_{x \in \mathbb{R}^n} \rho_s(x) \, dx := \frac{\rho_s}{s^n}.
\end{equation}
  
\end{center}
For a discrete set \( S \), any vector \( v \in \mathbb{R}^n \), use \( \rho_{s,v}(x) := \rho_s(x - v) \) to analogy the form of \( \rho_s \). Then, we get the joint discrete normal probability distribution \( D_{S,s} \):

	\begin{center}
    \begin{equation}
        \forall \boldsymbol{x} \in S, \mathcal{D}_{S, s}(\boldsymbol{x})=\frac{\rho_{s}(\boldsymbol{x})}{\rho_{s}(S)}.
    \end{equation}
		
	\end{center}
\end{definition}

\begin{definition}{\rm (Standard \textsf{LWE} distribution \cite{regev2009lattices-Onlattices}).}

Generate two integers \( n \) and \( m \) with a security parameter \( \kappa \) as the seed: \( n = n(\kappa) \) denotes the amount of equations, \( m = m(\kappa) \) denotes the dimension. Sample noise according to the distribution \( \chi = \chi(\kappa) \) over \( \mathbb{Z} \) with the bound \( B = B(\kappa) \), \( q = q(\kappa) \geq 2 \) is an integer modulus for any polynomial \( p = p(\kappa) \) with the lower bound \( q \geq 2pB \). Sample a secret vector \( s \in \mathbb{Z}_q^{n \times 1} \), then an {\rm \textsf{LWE}} \textit{distribution\ $\mathcal{A}_{\boldsymbol{s}, \chi}  \ over  \ \mathbb{Z}_{q}^{n} \times \mathbb{Z}_{q}$ is sampled by choosing $\mathbf{A} \in \mathbb{Z}_{q}^{m \times n}$  uniformly at random, choosing  $\boldsymbol{e} \leftarrow \chi^{m \times 1}$,} and then get an LWE instance: \(\mathbf{A}, \boldsymbol{b}=\mathbf{A}\boldsymbol{s}+\boldsymbol{e}(\bmod\ q) \).

\end{definition}

\begin{definition}{\rm (Standard decisional-\textsf{LWE}$\operatorname{}_{{n \times m},{m \times 1}, q, \chi}$\cite{regev2009lattices-Onlattices}).}
	Given an independent sample $(\mathbf{A}, \boldsymbol{b}) \in \mathbb{Z}_{q}^{n \times m} \times \mathbb{Z}_{q}^{m \times 1}$ , where the sample is distributed according to either: \textbf{1)}.  $\mathcal{A}_{s, \chi}$  for a uniformly random 
	$\boldsymbol{s} \in \mathbb{Z}_{q}^{{n}}=\{(\mathbf{A}, \boldsymbol{b}): \mathbf{A} \leftarrow \mathbb{Z}_{q}^{n \times m}, \boldsymbol{s} \leftarrow \mathbb{Z}_{q}^{m}, \boldsymbol{e} \leftarrow 
	\chi^{n}, \boldsymbol{b}=\mathbf{A} \boldsymbol{s}+\boldsymbol{e}\ (\bmod\  q)\}$), or \textbf{2)}. $\mathcal{U}$, the uniform distribution. And the results of the two sampling methods are computationally indistinguishable.
\end{definition}

In this paper, we design the protocol from an extended multi-instance \textsf{LWE} problem. And with the definition below, we need to take operations from both sides of the space \( \mathbf{A} \) instead of operations on elements in a commutative ring \cite{ding-Asimpleprovably} \cite{rlwesrp}. And this can also be seen as a general construction in the multi-instance \textsf{LWE} problem.

\begin{definition}{\rm (Extended multi-instance decisional-\textsf{LWE}$\operatorname{}_{{n \times n},{n \times n}, q, \chi}$).}
	Given an independent sample $(\mathbf{A}, \mathbf{B}) \in \mathbb{Z}_{q}^{n \times n} \times \mathbb{Z}_{q}^{n \times n}$ , where the sample is distributed according to either: a).  $\mathcal{A}_{\mathbf{S}, \chi}$  for a uniformly random 
	$\mathbf{S} \in \mathbb{Z}_{q}^{{n \times n}}=\{(\mathbf{A}, \mathbf{B}): \mathbf{A} \leftarrow \mathbb{Z}_{q}^{n \times n}, \mathbf{S} \leftarrow \mathcal{D}_{\mathbb{Z}_{q}^{n \times n}, \tau}, \mathbf{E} \leftarrow 
	\mathcal{D}_{\mathbb{Z}_{q}^{n \times n}, \tau}, \mathbf{B}=\mathbf{A}\mathbf{S}+\mathbf{E}\ (\bmod\ q)\}$), or b). $\mathcal{U}$, the uniform distribution. And the results of different sampling methods are also computationally indistinguishable.
\end{definition}

\subsection{Key Exchange}

The idea of using the \textsf{SRP} protocol to obtain the session key is similar to that of the Diffie-Hellman (\textsf{DH}) protocol. Here we briefly introduce the \textsf{DH} problem and the \textsf{DH} key exchange process.

\begin{definition}{\rm (\textsf{DH} protocol \cite{dh-Newdirections})}
		There are mainly four parameters in the protocol:
		$p$, $g$, $a$ and $b$, where $p$, $g$ are publicly available numbers and $a$, $b$ are private.  Specifically, $p$ is a prime, $g$ is a primitive root of $p$, $a$, and $b$ are private values decided by the parties respectively during the protocol. 
		
		Parties ($Alice$ and $Bob$) pick private values $a$ and $b$ and they generate key materials to send with each other publicly. The opposite person receives the key material and then generates a session key.
\end{definition}
\begin{figure}[h]
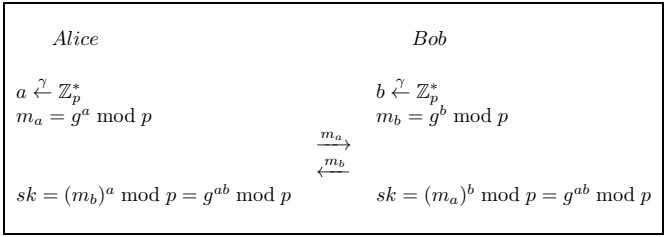

	\begin{center}
	\resizebox{1\columnwidth}{!}{
		\begin{threeparttable}
			\begin{tabular}{|lcl|}
				\hline&&\\
				~~~~~$Alice$& &~~~~~$Bob$\\
				&&\\
				
				$a \stackrel{\gamma}{\leftarrow}\mathbb{Z}_{p}^{*}$
				&&$b \stackrel{\gamma}{\leftarrow}\mathbb{Z}_{p}^{*}$\\
				
				$m_{a}=g^{a} \bmod p$
				&&$m_{b}=g^{b} \bmod p$\\
				
				&$\xrightarrow{m_{a}}$&\\
				&$\xleftarrow{m_{b}}$&\\
				
				$sk=(m_{b})^{a} \bmod p=g^{a b} \bmod p$
				&&$sk=(m_{a})^{b} \bmod p=g^{a b} \bmod p$\\
				
				&&\\
				\hline
			\end{tabular}
	\end{threeparttable}}
\end{center}
\caption{Diffie-Hellman key exchange.}
\end{figure}

\subsection{Basic Tools}
\begin{definition}{\rm (Signal function \cite{ding-Asimpleprovably}).}\label{signal}
	To reduce the noise added in the generation of key material, here use the technology of signal function from [12]. First, hint functions $\sigma_0(x)$, $\sigma_1(x)$ transfer a value from the space $\mathbb{Z}_q$ to the bit space $\{0, 1\}$:
	\begin{itemize}
		\item $\sigma_{0}(x)=\left\{\begin{array}{l}
			0, x \in\left[-\left\lfloor\frac{q}{4}\right\rfloor,\left\lfloor\frac{q}{4}\right\rfloor\right] \\
			1,otherwise
		\end{array}\right.,$\\
		\item $\sigma_{1}(x)=\left\{\begin{array}{l}
			0, x \in\left[-\left\lfloor\frac{q}{4}\right\rfloor+1,\left\lfloor\frac{q}{4}\right\rfloor+1\right] \\
			1,otherwise
		\end{array}\right.$
	\end{itemize}
Based on hint functions, let $\sigma(x)=\sigma_{b}(x)$ be the signal function, where $x \in \mathbb{Z}_{q}$,  $b \leftarrow\{0,1\}$.

\end{definition}

\begin{definition}{\rm (Extractor function).}
	To remove error perturbations in key material, we construct the function for the extended multi-instance  {\rm \textsf{LWE}} structure based on the  tool functions in \cite{ding-Asimpleprovably}. Let
	 $\varphi(\cdot)$ be a deterministic function: 
	 \begin{center}
	 $\varphi(x, \sigma)=(x+\sigma \cdot \frac{q-1}{2} \bmod q) \bmod 2$.	
	 \end{center}
	  The inputs of $\varphi(\cdot)$ are: $x \in \left\{\mu_{i,j}:[\mu_{i,j}]_{n \times n} \in \mathbb{Z}^{n \times n}\right\}$ and  the signal $\sigma$. The output of $\varphi(\cdot)$ is a bit matrix $[k_{i, j}]_{n \times n}$. For any entry of the matrix $x, y \in \mathbb{Z}_{q}$, if $\|x-y\|_{\infty} \leq \delta$, $\varphi(x, \sigma)=\varphi(y, \sigma)$, where $\sigma=\sigma(y)$, $\delta$ is the error tolerance. 
\end{definition}

\section{Public-key encryption From \textsf{\textmd{LWE}}}\label{LWE PKE}

In this section, we review the \textsf{LWE}-based public key cryptosystem and the indistinguishability of the decisional-\textsf{LWE} problem. The complete proof of correctness and security is omitted due to the limited space. (Please refer to \cite{regev2009lattices-Onlattices} and \cite{lwe-ana-Robustness} for complete proof and analysis.)
\begin{definition}{\rm (\textsf{LWE}-based public key cryptography \cite{regev2009lattices-Onlattices}).}
	The original \textsf{LWE}-based public key system is parameterized with the security parameter $n$, denoting the dimension of the space. Specifically, there are two integers $m$ and $p$ derived from $n$. $m$ denotes the scale of the equations and $p$ is a prime. And a probability distribution $\chi$ on $\mathbb{Z}_p$ is required to generate an error. To guarantee the security of the scheme, parameters hold: $m = 5n$, the prime $p > 2$ is required between $n^2$ and $2n^2$. The probability distribution $\chi$ is taken to be $\Psi_{\alpha(n)}$ for $\alpha(n) = o\left(\frac{1}{\sqrt{n} \log n}\right)$, i.e., $\alpha(n)$ is such that $\lim_{n \to \infty} \frac{\alpha(n)}{\sqrt{n} \log n} = 0$.
	
	\begin{itemize}
		\item {\rm \textsf{Private Key:}} Choose a secret $s \in \mathbb{Z}_p^n$ randomly to be the private key
		\item {\rm \textsf{Public Key:}} Choose $m$ vectors $\boldsymbol{a_1}, \ldots, \boldsymbol{a_m} \in \mathbb{Z}_p^n$ independently from the uniform distribution, choose $m$ elements $e_1, \ldots, e_m \in \mathbb{Z}_p$ independently with the distribution $\chi$. Then compute $b_i = \langle a_i, s \rangle + e_i$. And the public key is the generated \textsf{LWE} instance: $(a_i, b_i)_{i=1}^m$
		\item {\rm \textsf{Encryption:}} Encryption cases are $\left( \sum_{i \in S} \boldsymbol{a_i}, \sum_{i \in S} b_i \right)$ when the bit is 0 and $\left( \sum_{i \in S} \boldsymbol{a_i}, \left\lfloor\frac{p}{2}\right\rfloor + \sum_{i \in S} b_i \right)$ via adding a distinguished value when it is 1, where $S$ is a randomly chosen subset from the lattice space.
		\item {\rm \textsf{Decryption:}} 
        To get a plain text bit from the cipher, the distance to 0 is required to be computed. Consider a pair $(a, b)$, we get 0 if $b - \langle a, s \rangle$ is closer to 0 under the bound $\left\lfloor\frac{p}{2}\right\rfloor$ modulo $p$ and we get 1 if it is more than $\left\lfloor\frac{p}{2}\right\rfloor$.
        
	\end{itemize}
\end{definition}

\begin{theorem}{{\rm (Indistinguishability of the dicisional-\textsf{LWE} problem \cite{lwe-ana-Robustness})}} For a given secure parameter $\kappa \log q$, hash function set denoted as \textsf{H} $: \{0, 1\}^n \to \{0, 1\}^*$ are difficult to invert with the hardness $2^{-k}$. And the indistinguishability is described as:
	 For any super-polynomial $q=q(n)$, any $m=$ $poly(n)$, any $\beta, \gamma \in(0, q)$ such that $\gamma / \beta=$ $negl(n)$
	 
	 \begin{center}
     \begin{equation}
         	 	(\mathbf{A}, \mathbf{A} \mathbf{s}+\boldsymbol{x}, h(\boldsymbol{s})) \approx_{c}(\mathbf{A}, \boldsymbol{u}, h(\boldsymbol{s})),
     \end{equation}

	 \end{center}
 where $\mathbf{A} \leftarrow \mathbb{Z}_{q}^{m \times n}$, $\boldsymbol{s} \leftarrow \mathbb{Z}_{q}^{n}$ and $\boldsymbol{u} \leftarrow \mathbb{Z}_{q}^{m}$ are uniformly random and $\boldsymbol{x} \leftarrow \bar{\Psi}_{\beta}^{m}$.
 \label{th1}
\end{theorem}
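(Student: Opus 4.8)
The plan is to reduce the claim to the standard decisional-\textsf{LWE} assumption (the decisional definition given above) through a short chain of hybrids, using a \emph{lossy} way of sampling $\mathbf{A}$ together with a noise-smudging step and an entropy/leftover-hash argument that converts the hardness of inverting $h$ into residual randomness of $\boldsymbol{s}$. At a high level, each hybrid either invokes indistinguishability of \textsf{LWE} samples from uniform (a computational step) or a statistical lemma controlled by the parameter constraints $m = poly(n)$, $q = q(n)$ super-polynomial, and $\gamma/\beta = negl(n)$.

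First I would replace the uniform $\mathbf{A} \leftarrow \mathbb{Z}_q^{m\times n}$ by a \emph{lossy} matrix $\mathbf{A} = \mathbf{B}\mathbf{C} + \gamma\mathbf{G}$, where $\mathbf{C} \in \mathbb{Z}_q^{\ell \times n}$ with $\ell \ll n$, $\mathbf{B}\leftarrow \mathbb{Z}_q^{m\times \ell}$, and $\mathbf{G}$ is drawn from $\bar{\Psi}$ at scale $\gamma$. Each column of such an $\mathbf{A}$ is itself an \textsf{LWE} instance, so by the standard decisional-\textsf{LWE} assumption the lossy $\mathbf{A}$ is computationally indistinguishable from a uniform one, a move costing only a negligible term and leaving the coordinates $(\mathbf{A}\boldsymbol{s}+\boldsymbol{x}, h(\boldsymbol{s}))$ formally unchanged. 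On the lossy matrix I then expand $\mathbf{A}\boldsymbol{s}+\boldsymbol{x} = \mathbf{B}(\mathbf{C}\boldsymbol{s}) + (\gamma\mathbf{G}\boldsymbol{s}+\boldsymbol{x})$. Because $\gamma/\beta = negl(n)$, the perturbation $\gamma\mathbf{G}\boldsymbol{s}$ is swallowed by $\boldsymbol{x}\leftarrow\bar{\Psi}_{\beta}^{m}$: a smudging (noise-flooding) lemma gives that $\gamma\mathbf{G}\boldsymbol{s}+\boldsymbol{x}$ lies within $negl(n)$ statistical distance of a fresh $\boldsymbol{x}'\leftarrow\bar{\Psi}_{\beta}^{m}$. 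Hence, up to negligible distance, the middle coordinate is $\mathbf{B}(\mathbf{C}\boldsymbol{s})+\boldsymbol{x}'$, and all dependence on $\boldsymbol{s}$ now flows through the short vector $\mathbf{C}\boldsymbol{s}\in\mathbb{Z}_q^{\ell}$ and through $h(\boldsymbol{s})$.

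The crucial step is to argue that, conditioned on $h(\boldsymbol{s})$, the projection $\mathbf{C}\boldsymbol{s}$ is still (pseudo)uniform. Since $h$ is hard to invert with advantage at most $2^{-k}$, the secret $\boldsymbol{s}$ retains about $k$ bits of unpredictability given $h(\boldsymbol{s})$; viewing $\mathbf{C}$ as a seed for a universal hash into $\mathbb{Z}_q^{\ell}$ and choosing the lossy rank so that $\ell\log q \le k - \omega(\log n)$, a generalized leftover-hash lemma over $\mathbb{Z}_q$ yields $(\mathbf{C},\,\mathbf{C}\boldsymbol{s},\,h(\boldsymbol{s})) \approx_s (\mathbf{C},\,\boldsymbol{w},\,h(\boldsymbol{s}))$ for uniform $\boldsymbol{w}\leftarrow\mathbb{Z}_q^{\ell}$. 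Substituting, the middle coordinate becomes $\mathbf{B}\boldsymbol{w}+\boldsymbol{x}'$, which is independent of $\boldsymbol{s}$; after reabsorbing the smudging and switching $\mathbf{A}$ back to uniform, this is indistinguishable from a uniform $\boldsymbol{u}\leftarrow\mathbb{Z}_q^{m}$, so reversing the hybrids delivers $(\mathbf{A}, \mathbf{A}\boldsymbol{s}+\boldsymbol{x}, h(\boldsymbol{s})) \approx_c (\mathbf{A}, \boldsymbol{u}, h(\boldsymbol{s}))$.

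The main obstacle is precisely this entropy step: the hardness of inversion is a \emph{computational} guarantee ($2^{-k}$), not an information-theoretic min-entropy bound, so the leftover-hash lemma cannot be applied naively. I expect to need a Goldreich--Levin-type theorem over $\mathbb{Z}_q$ (as in \cite{lwe-ana-Robustness}) to turn uninvertibility of $h$ into unpredictability of the linear functionals $\langle \boldsymbol{c},\boldsymbol{s}\rangle$, and then to balance the three constraints simultaneously---the lossy rank $\ell$, the inversion hardness $k$, and the noise ratio $\gamma/\beta = negl(n)$---so that every hybrid transition remains negligible at once.
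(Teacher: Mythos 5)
The paper does not actually prove this statement: Theorem~\ref{th1} is imported verbatim from the cited reference, and Section~\ref{LWE PKE} explicitly says the proof ``is omitted due to the limited space'' and defers to \cite{regev2009lattices-Onlattices} and \cite{lwe-ana-Robustness}. So there is no in-paper argument to compare against. Measured against the source the theorem comes from, your proposal is essentially the right reconstruction: the lossy-matrix hybrid $\mathbf{A} = \mathbf{B}\mathbf{C} + \mathbf{Z}$ justified by decisional-\textsf{LWE}, noise smudging using $\gamma/\beta = negl(n)$, and a leftover-hash / Goldreich--Levin step to argue $\mathbf{C}\boldsymbol{s}$ stays uniform given the leakage $h(\boldsymbol{s})$ is exactly the proof strategy of the cited work, and you correctly isolate the genuinely delicate point, namely that computational uninvertibility of $h$ is not a min-entropy bound and must be converted via a Goldreich--Levin theorem over $\mathbb{Z}_q$.

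One caveat you should make explicit: the smudging step needs $\mathbf{Z}\boldsymbol{s}$ to be small relative to $\beta$, which requires the secret $\boldsymbol{s}$ to be short (e.g., binary or Gaussian), whereas the theorem as transcribed here takes $\boldsymbol{s}$ uniform over $\mathbb{Z}_q^{n}$; with such an $\boldsymbol{s}$ the term $\gamma\mathbf{G}\boldsymbol{s}$ has magnitude on the order of $\gamma q$ and is not swallowed by $\bar{\Psi}_{\beta}^{m}$. This is a defect in the paper's restatement of the theorem (the original result is for weak/short secrets), but your proof as written silently inherits it, so you should either restrict $\boldsymbol{s}$ or note that the statement needs that correction for the argument to close.
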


\section{Original Secure Remote Password Protocol}\label{SRP}
In this section, we briefly introduce the process of the \textsf{SRP} protocol.
\subsubsection{Notions}

Notations used in the original SRP protocol are as follows:

\begin{figure}[h]
	\begin{center}
		\resizebox{1\columnwidth}{!}{
			\begin{threeparttable}
				\begin{tabular}{|cl|}
					\hline&\\
					$n$& A prime number and the operations are all based on this modulo. $n$.\\
					$g$& A generator of the group.\\
					$u$& A public parameter chosen randomly.\\
					$P$& The customized password from the user.\\
					$s$& Salt: A string chosen randomly.\\
					$x$& A private key generated from the key material during the protocol.\\
					$v$& Key part: The password verifier.\\
					$a, b$& Temporary private material.\\
					$A, B$ & Public keys.\\
					$\textsf{H}()$& Hash function.\\
					$m, n$& Concatenated strings.\\
					$K$& Session key.\\
					&\\
					\hline
				\end{tabular}
		\end{threeparttable}}
	\end{center}
	\caption{Notions in the \textsf{SRP} protocol.}
	\label{srp1}
\end{figure}

\subsubsection{{\rm\textsf{SRP}} protocol {\rm \cite{srp0}}}

In the initial phrase, the verifier is required to be generated first. Specifically, \textit{Alice} computes $x = \textsf{H}(s, P)$ and $v = g^x$, where $s$ is the salt generated randomly. Then, \textit{Bob} stores $v$ and $s$ as the identity information of \textit{Alice}. Finally, temporary $x$ is destroyed for security. The data that \textit{Bob} needs to keep is the secret verifier $v$ from \textit{Alice} and $v$ is also the key part in authentication process. Therefore, there is no need for \textit{Alice} to keep the public key of \textit{Bob}. The procedure of the protocol is described in Figure 3 and the details are as follows:

\begin{itemize}
	\item[1).] $Alice$ sends her identity to $Bob$.
	\item[2).] \textit{Bob} looks up \textit{Alice}'s information tuple and extracts the verifier and salt. Then, \textit{Bob} sends the salt to \textit{Alice} to help her rebuild the verifier. In this case, \textit{Alice} cannot generate $v$ unless she holds the password $P$ set at the beginning of the protocol. With this process, the authentication is completed.
	\item[3).] \textit{Alice} randomly chooses a random number $a$ and computes the key material $A = g^a$, then she sends the result to \textit{Bob}.
	\item[4).] \textit{Bob} randomly chooses a number $b$ and computes the key material $B = v + g^b$, then he sends the result to \textit{Alice}.
	\item[5).] \textit{Alice} and \textit{Bob} compute the shared value $S = g^{ab + bux}$ with the key material sent to each other as inputs. And the final equivalent value can be obtained only if \textit{Alice} recovers the verifier with the correct password $P$ in Step 2.
	\item[6).] Both parties compute the hash value using $S$ to generate the session key.
	\item[7).] \textit{Alice} sends $M_1$ to \textit{Bob}. With this, \textit{Bob} knows that \textit{Alice} has generated the session key successfully. And then, he computes $M_1$ to verify the correctness and validity.
	\item[8).] \textit{Alice} sends $M_1$ to \textit{Bob}. With this, \textit{Bob} knows that \textit{Alice} has generated the session key successfully. And then, he computes $M_1$ to verify the correctness and validity.
\end{itemize}

\begin{figure}[h]
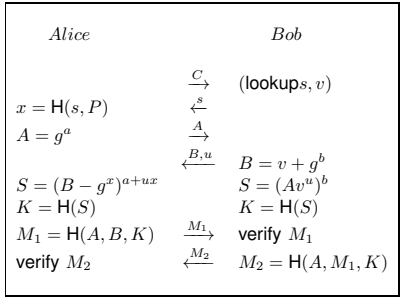

	\begin{center}
		\resizebox{0.6\columnwidth}{!}{
			\begin{threeparttable}
				\begin{tabular}{|lcl|}
					\hline&&\\
					~~~~~$Alice$& &~~~~~$Bob$\\
					&&\\
					
					&$\xrightarrow{C}$
					& $(\textsf{lookup}s, v)$\\
					
					$x=\textsf{H}(s,P)$
					&$\xleftarrow{s}$&\\
					
					$A=g^{a}$
					&$\xrightarrow{A}$&\\
					
					&$\xleftarrow{B, u}$
					&$B=v+g^{b}$\\
					
					$S=(B-g^{x})^{a+ux}$
					&&$S=(Av^{u})^{b}$\\
					
					$K=\textsf{H}(S)$
					&&$K=\textsf{H}(S)$\\
					
					$M_{1}=\textsf{H}(A, B, K)$
					&$\xrightarrow{M_{1}}$
					&$\textsf{verify}\ M_{1}$\\
					
					$\textsf{verify}\ M_{2}$
					&$\xleftarrow{M_{2}}$
					&$M_{2}=\textsf{H}(A, M_{1}, K)$\\

					&&\\
					\hline
				\end{tabular}
		\end{threeparttable}}
	\end{center}
	\caption{The original \textsf{SRP} protocol.}
	\label{srp2}
\end{figure}

\section{Secure Remote Password Protocol From LWE}\label{LWE SRP}

In this section, we describe the processes of the \textmd{LWE}-based
\textmd{SRP} protocol in detail.

\subsection{Design of {\rm \textsf{\textmd{LWE}}}-based {\rm \textsf{\textmd{SRP}}}}\label{design}
The complete process of the \textsf{LWE}-based \textsf{SRP} protocol is given in this section and Figure. \ref{lwesrp} shows a precise description.

\begin{itemize}
	\item[1).] $Client$ generates a lattice basis of the \textsf{LWE} problem through the public security parameter $\lambda$. Then he sets the password and randomly generates the $salt$, and combines these materials with $id$ to generate a random seed $\gamma$ (e.g., the hash value of these parameters).
	\item[2).] $Client$ generates the initial secret $\mathbf{S}_{I}$ and noise $\mathbf{E}_{I}$ according to the random seed $\gamma$, and constructs an \textsf{LWE} problem instance $\mathbf{V}$ as the verifier. Then the tuple$<id,salt,\mathbf{V}>$ is sent to $Server$.
	\item[3).] $Server$ keeps the message received from $Client$ and builds an index for $Client$ according to $id$. After sending the message successfully, $Client$ must completely erase all the values generated in the above processes for security. Certainly, $Client$ needs to keep the password. Up to now, the generation process of the verifier $\mathbf{V}$ has been completed and the next processes are for generating a session key.
	\item[4).] At the beginning of a key exchange process, both parties need to generate the lattice basis according to the public security parameter $\lambda$ as in the first step. Then each randomly generates $\mathbf{B}_{C}$ and $\mathbf{B}_{S}$ (the \textsf{LWE} problem instances) on this basis. Specifically, $Server$ needs to superimpose the verifier $\mathbf{V}$ in the exchange material $\mathbf{B}_{S}$ according to $id$ of $Client$ saved before. And $\mathbf{B}_{C}$ is sent to $Server$.
	\item[5).] $Server$ calculates the corresponding signal $\sigma$ based on the generated key material $\mathbf{M}_{S} =(\mathbf{V}+\mathbf{B}_{C})\mathbf{B}_{S}+2\mathbf{E}_{S}^{\prime}$. Then the tuple $<salt, \mathbf{B}_{S}, \sigma>$ is sent to $Client$.
	\item[6).] $Client$ first recovers the random seed $\gamma$ and then generates the secret $\mathbf{B}_{I}$ and the noise $\mathbf{E}_{I}$ in the verifier $\mathbf{V}$. $Client$ calculates his key material $\mathbf{M}_{C} =(\mathbf{S}_{I}+\mathbf{S}_{C})(\mathbf{B}_{S}-\mathbf{V})+2\mathbf{E}_{C}^{\prime}$.
	\item[7).] Both parties calculate the key materials respectively via the function $\varphi$ to obtain the equivalent value. Finally, the same session key is obtained via the key derivation function (\textsf{KDF}). 
\end{itemize}

\begin{figure}[h]
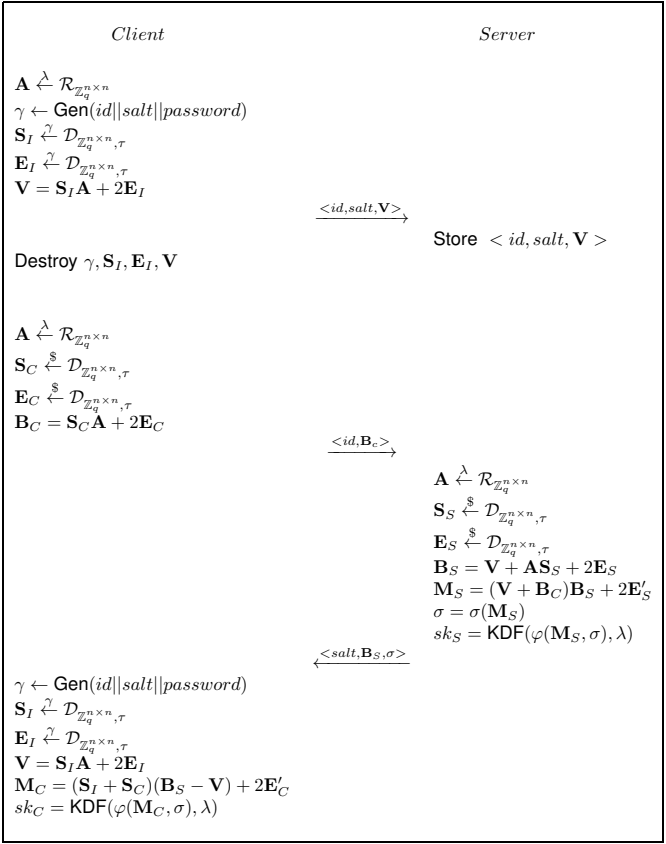

\begin{center}
	\resizebox{1\columnwidth}{!}{
		\begin{threeparttable}
			\begin{tabular}{|lcl|}
				\hline&&\\
				~~~~~~~~~~~~~~~$Client$& &~~~~~~~$Server$\\
				&&\\
				
				$\mathbf{A}\stackrel{\lambda}{\leftarrow} \mathcal{R}_{\mathbb{Z}_{q}^{n \times n}}$&&\\
				
				$\gamma \leftarrow \textsf{Gen}(id||salt||password)$&&\\
				
				$\mathbf{S}_{I}\stackrel{\gamma}{\leftarrow}  \mathcal{D}_{\mathbb{Z}_{q}^{n \times n}, \tau}$&&\\
				
				$\mathbf{E}_{I} \stackrel{\gamma}{\leftarrow} \mathcal{D}_{\mathbb{Z}_{q}^{n \times n}, \tau}$&&\\
				
				$\mathbf{V} =\mathbf{S}_{I}\mathbf{A} +2\mathbf{E}_{I}$&&\\
				
				&$\xrightarrow{<id, salt, \mathbf{V}>}$&\\
				
				&&$\textsf{Store} \ <id, salt, \mathbf{V}>$\\
				
				$\textsf{Destroy} \ \gamma, \mathbf{S}_{I}, \mathbf{E}_{I}, \mathbf{V}$&&\\
				
				&&\\
				&&\\
				
				$\mathbf{A}\stackrel{\lambda}{\leftarrow} \mathcal{R}_{\mathbb{Z}_{q}^{n \times n}}$&&\\

				$\mathbf{S}_{C}\stackrel{\$}{\leftarrow}  \mathcal{D}_{\mathbb{Z}_{q}^{n \times n}, \tau}$&&\\

				$\mathbf{E}_{C} \stackrel{\$}{\leftarrow} \mathcal{D}_{\mathbb{Z}_{q}^{n \times n}, \tau}$&&\\

				$\mathbf{B}_{C} =\mathbf{S}_{C}\mathbf{A} +2\mathbf{E}_{C}$&&\\

				&$\xrightarrow{<id,\mathbf{B}_{c}>}$&\\

				&&$\mathbf{A}\stackrel{\lambda}{\leftarrow} \mathcal{R}_{\mathbb{Z}_{q}^{n \times n}}$\\

				&&$\mathbf{S}_{S}\stackrel{\$}{\leftarrow}  \mathcal{D}_{\mathbb{Z}_{q}^{n \times n}, \tau}$\\

				&&$\mathbf{E}_{S} \stackrel{\$}{\leftarrow} \mathcal{D}_{\mathbb{Z}_{q}^{n \times n}, \tau}$\\

				&&$\mathbf{B}_{S} =\mathbf{V}+\mathbf{A}\mathbf{S}_{S} +2\mathbf{E}_{S}$\\

				&&$\mathbf{M}_{S} =(\mathbf{V}+\mathbf{B}_{C})\mathbf{B}_{S}+2\mathbf{E}_{S}^{\prime}$\\
				
				&&$\sigma=\sigma(\mathbf{M}_{S})$\\
				
				&&$sk_{S}=\textsf{KDF}(\varphi(\mathbf{M}_{S}, \sigma),\lambda)$\\

				&$\xleftarrow{<salt, \mathbf{B}_{S}, \sigma>}$&\\
				
				$\gamma \leftarrow \textsf{Gen}(id||salt||password)$&&\\
				
				$\mathbf{S}_{I}\stackrel{\gamma}{\leftarrow}  \mathcal{D}_{\mathbb{Z}_{q}^{n \times n}, \tau}$&&\\
				
				$\mathbf{E}_{I} \stackrel{\gamma}{\leftarrow} \mathcal{D}_{\mathbb{Z}_{q}^{n \times n}, \tau}$
				&&\\
				
				$\mathbf{V} =\mathbf{S}_{I}\mathbf{A} +2\mathbf{E}_{I}$&&\\	
				
				$\mathbf{M}_{C} =(\mathbf{S}_{I}+\mathbf{S}_{C})(\mathbf{B}_{S}-\mathbf{V})+2\mathbf{E}_{C}^{\prime}$&&\\
				
				$sk_{C}=\textsf{KDF}(\varphi(\mathbf{M}_{C}, \sigma),\lambda)$&&\\
				
				&&\\
				\hline
			\end{tabular}
	\end{threeparttable}}
\end{center}
\caption{The \textsf{\textmd{LWE}}-based \textsf{\textmd{SRP}} protocol.}
\label{lwesrp}
\end{figure}
\subsection{Correctness}
\subsubsection{Calculations of key materials} To ensure that both parties generate the same session key, an important part of the protocol is to calculate the key materials. The specific calculations are as follows:
	\begin{itemize}
		\item $Client$: \begin{footnotesize}\begin{equation*}
			\begin{aligned}
					\mathbf{M}_{C} &=(\mathbf{S}_{I}+\mathbf{S}_{C})(\mathbf{B}_{S}-\mathbf{V})+2\mathbf{E}_{C}^{\prime}\\
					&=(\mathbf{S}_{I}+\mathbf{S}_{C})(\mathbf{A}\mathbf{S}_{S} +2\mathbf{E}_{S})+2\mathbf{E}_{C}^{\prime}\\
					&=\mathbf{S}_{I}\mathbf{A}\mathbf{S}_{S}+\mathbf{S}_{C}\mathbf{A}\mathbf{S}_{S}+2(\mathbf{S}_{I}\mathbf{E}_{S}+\mathbf{S}_{C}\mathbf{E}_{S}+\mathbf{E}_{C}^{\prime})\\
					&\stackrel{\varphi}{\Rightarrow}\mathbf{S}_{I}\mathbf{A}\mathbf{S}_{S}+\mathbf{S}_{C}\mathbf{A}\mathbf{S}_{S},\\
			\end{aligned}
		\end{equation*}\end{footnotesize}
			\item $Server$: \begin{footnotesize}\begin{equation*}
		\begin{aligned}
			\mathbf{M}_{s} &=(\mathbf{V}+\mathbf{B}_{C})\mathbf{S}_{S}+2\mathbf{E}_{S}^{\prime}\\
			&=(\mathbf{S}_{I}\mathbf{A} +2\mathbf{E}_{I}+\mathbf{B}_{C})\mathbf{S}_{S}+2\mathbf{E}_{S}^{\prime}\\
			&=\mathbf{S}_{I}\mathbf{A}\mathbf{S}_{S}+(\mathbf{S}_{C}\mathbf{A} +2\mathbf{E}_{C})\mathbf{S}_{S}+2\mathbf{E}_{i}\mathbf{S}_{S}+2\mathbf{E}_{S}^{\prime}\\
			&=\mathbf{S}_{I}\mathbf{A}\mathbf{S}_{S}+\mathbf{S}_{C}\mathbf{A}\mathbf{S}_{S}+2(\mathbf{E}_{C}\mathbf{S}_{S}+\mathbf{E}_{I}\mathbf{S}_{S}+\mathbf{E}_{S}^{\prime})\\
			&\stackrel{\varphi}{\Rightarrow}\mathbf{S}_{I}\mathbf{A}\mathbf{S}_{S}+\mathbf{S}_{C}\mathbf{A}\mathbf{S}_{S}.\\
		\end{aligned}
	\end{equation*}\end{footnotesize}
	
	\end{itemize}

And now, it is necessary to prove the correctness of the calculations.

\subsubsection{Correctness guaranteed via $\sigma(\cdot)$ and $\varphi(\cdot)$}
\begin{lemma}{\rm (Correctness of extractor function \cite{ding-Asimpleprovably})}\label{lem1}
	Let $q>8$ be an odd integer, then the function $\varphi(\cdot)$ defined above guarantees the correctness concerning $\sigma(\cdot)$ with the error tolerance $\delta \leqslant \frac{q}{4}-2$.
\end{lemma}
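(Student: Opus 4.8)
The plan is to reduce the reconciliation property to a single structural fact about parities in the centered representation of $\mathbb{Z}_q$. First I would fix the balanced representatives $\left\{-\frac{q-1}{2}, \ldots, \frac{q-1}{2}\right\}$, which is legitimate precisely because $q$ is odd, and read the reduction inside $\varphi(\cdot)$ as this centered reduction. The crucial observation is that the two compared entries always satisfy $x - y = 2e$ for some integer $e$: this is forced by the doubled noise terms ($+2\mathbf{E}_I$, $+2\mathbf{E}_C$, $2\mathbf{E}_S^{\prime}$, etc.) appearing throughout Figure~\ref{lwesrp} and in the expansions of $\mathbf{M}_C$ and $\mathbf{M}_S$. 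Writing $v \bmod q$ for the centered representative, a short computation that uses $q$ odd gives $\varphi(v,\sigma) \equiv \big(v + \sigma\tfrac{q-1}{2}\big) - r(v) \pmod 2$, where $r(v)$ records how many times the shifted value wraps past the boundary $\pm\frac{q-1}{2}$. Since the same signal $\sigma=\sigma(y)$ is used in both arguments, the additive constant $\sigma\tfrac{q-1}{2}$ cancels in the difference, so $\varphi(x,\sigma)-\varphi(y,\sigma) \equiv (x-y)-\big(r(x)-r(y)\big) \equiv r(x)-r(y) \pmod 2$ using $x-y=2e$. The whole lemma therefore collapses to showing that the shifted copies of $x$ and $y$ lie in one and the same centered interval of width $q$, i.e. $r(x)=r(y)$.

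Next I would run the case analysis on $\sigma=\sigma(y)\in\{0,1\}$. When $\sigma=0$, the signal definition places $y$ in the central band $\left[-\lfloor q/4\rfloor,\lfloor q/4\rfloor\right]$ (or its $+1$-shifted variant for the random bit $b$), and $\varphi$ applies no shift; the bound $\|x-y\|_\infty\le\delta=\frac{q}{4}-2$ then forces $x$ into $\left(-\frac{q}{2},\frac{q}{2}\right)$, so both values sit inside $\left(-\frac{q-1}{2},\frac{q-1}{2}\right)$ and $r(x)=r(y)=0$. When $\sigma=1$, $y$ lies in the complementary outer band, and adding $\frac{q-1}{2}$ translates that band back into the central region; the same $\delta$-bound keeps the shifted $x$ within $\frac{q}{4}-2$ of the shifted $y$, so no wrap boundary is crossed and $r(x)=r(y)$ again. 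Combining this with the parity reduction of the first paragraph yields $\varphi(x,\sigma)=\varphi(y,\sigma)$ in every case.

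The main obstacle I anticipate is the edge arithmetic. The signal function is defined with the floors $\lfloor q/4\rfloor$ and comes in two variants $\sigma_0,\sigma_1$ differing by the $\pm 1$ in their interval endpoints (selected by the random bit $b$), and correctness must hold uniformly over that choice. Concretely, I need the slack in $\delta\le\frac{q}{4}-2$ to absorb both the gap between $\lfloor q/4\rfloor$ and $q/4$ and the one-unit ambiguity between $\sigma_0$ and $\sigma_1$, so that the shifted values provably never touch the wrap-around point $\pm\frac{q-1}{2}$ even at the worst boundary position; the hypothesis $q>8$ is what keeps these intervals non-degenerate and the $-2$ correction meaningful. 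I would finish by checking the two extremal configurations explicitly — $y$ at the inner edge $\pm\lfloor q/4\rfloor$ and $y$ near the outer edge $\pm\frac{q-1}{2}$ — since those are the only places where the equality $r(x)=r(y)$ could conceivably fail, and verifying that the strict inequalities coming from $q>8$ rule both out.
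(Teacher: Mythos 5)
Your proposal is correct and follows essentially the same route as the paper's proof: both hinge on the even difference $x-y=2\varepsilon$ coming from the doubled noise terms, on the fact that the signal $\sigma(y)$ confines the centered value $y+\sigma\cdot\frac{q-1}{2} \bmod q$ to a band of radius roughly $\frac{q}{4}+1$, and on the bound $\delta\leqslant\frac{q}{4}-2$ preventing any further wrap modulo the odd $q$, so the parities agree. Your wrap-count $r(\cdot)$ bookkeeping and the explicit case split on $\sigma\in\{0,1\}$ simply unpack the paper's single inequality $\left|\left(y+\sigma\cdot\frac{q-1}{2}\right)\bmod q+2\varepsilon\right|\leqslant\frac{q-1}{2}$.
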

\begin{proof}
	Let $2\varepsilon=\delta \leqslant \frac{q}{4}-2$. For any $x, y \in \mathbb{Z}_{q}$ such that $x-y=2 \varepsilon$  and $|2 \varepsilon| \leqslant \frac{q}{4}-2$. Let $\sigma \leftarrow \sigma(\cdot)$, due to the definition of $\sigma$: $\left|y+\sigma \cdot \frac{q-1}{2} \bmod q\right| \leqslant \frac{q}{4}+1$ for any input used in $\sigma(\cdot)$ to generate $\sigma$. And then:
		\begin{footnotesize}
			\begin{equation*}
				\begin{split}
					\left|\left(y+\sigma \cdot \frac{q-1}{2}\right) \bmod q+2 \varepsilon\right| &\leqslant \frac{q}{4}+1+|2 \varepsilon|\\
					&\leqslant \frac{q-1}{2},\\	
				\end{split}
			\end{equation*}
		\end{footnotesize}
	
	
	\begin{footnotesize}
		\begin{equation*}
			\begin{split}
				\Rightarrow
				x+\sigma \cdot \frac{q-1}{2} \bmod q&=y+\sigma \cdot \frac{q-1}{2}+2 \varepsilon \bmod q\\
				&=\left(y+\sigma \cdot \frac{q-1}{2}\right) \bmod q+2 \varepsilon,
			\end{split}
		\end{equation*}	
	\end{footnotesize}
	
	
	\begin{footnotesize}
		\begin{equation*}
			\begin{split}
				\Rightarrow
				\varphi(x, \sigma)&=\left(x+\sigma \cdot \frac{q-1}{2} \bmod q\right) \bmod 2\\
				&\xlongequal{\textgreater 1-negl(\lambda)}\left(y+\sigma \cdot \frac{q-1}{2} \bmod q\right) \bmod 2\\
				&=\varphi(y, \sigma).\\
			\end{split}
		\end{equation*}	
	\end{footnotesize}
\end{proof}

\begin{theorem}
	The correctness is guaranteed in the form of:
	 $\varphi(\mu_{i,j}^{0})=\varphi(\mu_{i,j}^{1})$ with a overwhelming probability when the error tolerance $\delta = 12(\tau \sqrt{n})^{2}\leqslant\frac{q}{4}-2$. And then both parties can derive the same session key via $\varphi(\cdot)$, where $\mu_{i,j}^{k}\in \left\{\mu_{i,j}^{k}:[\mu_{i,j}^{k}]_{n \times n} \in \mathbb{Z}_{q}^{n \times n},  k \in \{0,1\}\right\}$ 
\label{th2}		
\end{theorem}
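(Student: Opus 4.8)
The plan is to reduce this theorem to the entrywise statement of Lemma \ref{lem1} by showing that the two key materials $\mathbf{M}_{C}$ and $\mathbf{M}_{S}$ agree up to a small even perturbation whose infinity norm is at most $\delta = 12(\tau\sqrt{n})^{2}$. The starting point is the two reduced computations already given in the correctness subsection: both $\mathbf{M}_{C}$ and $\mathbf{M}_{S}$ contain the identical noiseless term $\mathbf{S}_{I}\mathbf{A}\mathbf{S}_{S}+\mathbf{S}_{C}\mathbf{A}\mathbf{S}_{S}$, so forming the difference causes this (large) term to cancel and leaves only the noise:
\begin{footnotesize}
\begin{equation*}
\mathbf{M}_{C}-\mathbf{M}_{S}=2\left(\mathbf{S}_{I}\mathbf{E}_{S}+\mathbf{S}_{C}\mathbf{E}_{S}+\mathbf{E}_{C}^{\prime}-\mathbf{E}_{C}\mathbf{S}_{S}-\mathbf{E}_{I}\mathbf{S}_{S}-\mathbf{E}_{S}^{\prime}\right).
\end{equation*}
\end{footnotesize}
Identifying $\mu_{i,j}^{0}$ with the $(i,j)$ entry of $\mathbf{M}_{C}$ and $\mu_{i,j}^{1}$ with that of $\mathbf{M}_{S}$, the theorem then follows once I establish the entrywise bound $\|\mathbf{M}_{C}-\mathbf{M}_{S}\|_{\infty}\leqslant\delta$.

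Next I would bound each of the six residual terms. The key observation is that a single entry of a matrix product such as $(\mathbf{S}_{I}\mathbf{E}_{S})_{i,j}$ is the inner product of the $i$-th row of $\mathbf{S}_{I}$ with the $j$-th column of $\mathbf{E}_{S}$, both of which are discrete Gaussian vectors in $\mathbb{Z}_{q}^{n}$ drawn from $\mathcal{D}_{\mathbb{Z}_{q}^{n\times n},\tau}$. By the standard discrete Gaussian tail bound, each such vector has Euclidean norm at most $\tau\sqrt{n}$ except with negligible probability, so Cauchy--Schwarz gives $|(\mathbf{S}_{I}\mathbf{E}_{S})_{i,j}|\leqslant(\tau\sqrt{n})^{2}$, and the same bound holds for the other three product terms $\mathbf{S}_{C}\mathbf{E}_{S}$, $\mathbf{E}_{C}\mathbf{S}_{S}$, $\mathbf{E}_{I}\mathbf{S}_{S}$. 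The two standalone noise matrices $\mathbf{E}_{C}^{\prime}$ and $\mathbf{E}_{S}^{\prime}$ contribute entries bounded by $\tau\sqrt{n}\leqslant(\tau\sqrt{n})^{2}$. Summing the six contributions and absorbing the leading factor $2$ from the even-noise structure yields $\|\mathbf{M}_{C}-\mathbf{M}_{S}\|_{\infty}\leqslant 2\cdot 6\cdot(\tau\sqrt{n})^{2}=12(\tau\sqrt{n})^{2}=\delta$, which explains the constant in the statement.

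With this bound in hand, I would invoke Lemma \ref{lem1} entrywise. The server transmits the signal $\sigma=\sigma(\mathbf{M}_{S})$, and since every pair of corresponding entries satisfies $|\mu_{i,j}^{0}-\mu_{i,j}^{1}|\leqslant\delta\leqslant\frac{q}{4}-2$, Lemma \ref{lem1} guarantees $\varphi(\mu_{i,j}^{0},\sigma)=\varphi(\mu_{i,j}^{1},\sigma)$ for each $(i,j)$. Consequently $\varphi(\mathbf{M}_{C},\sigma)$ and $\varphi(\mathbf{M}_{S},\sigma)$ coincide as bit matrices, and feeding this common matrix into $\textsf{KDF}$ produces $sk_{C}=sk_{S}$. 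A union bound over the $n^{2}$ matrix entries and the constant number of Gaussian vectors involved keeps the total failure probability negligible, so the equality, and hence the agreement of the two session keys, holds with overwhelming probability.

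The main obstacle I anticipate is pinning down the tail estimate so that the constant comes out cleanly as $12$. One must argue carefully that the dominant term $\mathbf{S}_{I}\mathbf{A}\mathbf{S}_{S}+\mathbf{S}_{C}\mathbf{A}\mathbf{S}_{S}$ cancels exactly and therefore never enters the norm estimate, and that each residual product term is genuinely controlled by $(\tau\sqrt{n})^{2}$ via the Gaussian norm bound together with Cauchy--Schwarz, rather than by a looser coordinatewise triangle inequality that would inflate the constant or introduce polylogarithmic factors. Verifying that the chosen modulus $q$ indeed satisfies $12(\tau\sqrt{n})^{2}\leqslant\frac{q}{4}-2$ for the protocol's parameter regime is a routine but necessary consistency check.
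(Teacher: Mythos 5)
Your proposal matches the paper's own proof in both structure and detail: you cancel the common term $\mathbf{S}_{I}\mathbf{A}\mathbf{S}_{S}+\mathbf{S}_{C}\mathbf{A}\mathbf{S}_{S}$, bound the six residual noise terms to obtain $\left\|\mathbf{M}_{C}-\mathbf{M}_{S}\right\|\leqslant 12(\tau\sqrt{n})^{2}\leqslant\frac{q}{4}-2$ with overwhelming probability, and then invoke Lemma~\ref{lem1} entrywise, exactly as the paper does (the paper writes the residual as $\left\|8\mathbf{S}_{*}\mathbf{E}_{*}+4\mathbf{E}_{*}\right\|\leqslant 12\left\|\mathbf{S}_{*}\mathbf{E}_{*}\right\|$, which is the same $8+4=12$ accounting you give). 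Your added explicitness about Cauchy--Schwarz on the Gaussian rows/columns and the union bound over the $n^{2}$ entries only fills in steps the paper leaves implicit, so the argument is correct and essentially identical.
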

\begin{proof}
	Assuming $\mathbf{A}_{n \times n}=[\boldsymbol{a}_{1},\boldsymbol{a}_{2},\ldots\boldsymbol{a}_{n}]$, $q$ is the modulus of $\mathbb{Z}_{q}^{n \times n}$ and the parameter of Gaussian distribution $\tau = \alpha q$.
	Let $\left\|\mathbf{A}\right\|:=\max\limits_{1 \leqslant i,j \leq n}\left\|a_{i,j}\right\|=\max\limits_{1 \leqslant i \leqslant n}\left\|\boldsymbol{a}_{i}\right\|_{\infty}$.
	\begin{footnotesize}
		\begin{equation*}
		\begin{aligned}
			\begin{split}
				\left\|\mathbf{M}_{C}-\mathbf{M}_{S}\right\|
				=&\ \|\mathbf{S}_{I}\mathbf{A}\mathbf{S}_{S}+\mathbf{S}_{C}\mathbf{A}\mathbf{S}_{S}+2(\mathbf{S}_{I}\mathbf{E}_{S}+\mathbf{S}_{C}\mathbf{E}_{S}+\mathbf{E}_{C}^{\prime})-\\
				&\ \mathbf{S}_{I}\mathbf{A}\mathbf{S}_{S}+\mathbf{S}_{C}\mathbf{A}\mathbf{S}_{S}+2(\mathbf{E}_{C}\mathbf{S}_{S}+\mathbf{E}_{I}\mathbf{S}_{S}+\mathbf{E}_{S}^{\prime})\|\\
				=&\ 2\left\|\mathbf{S}_{I}\mathbf{E}_{S}+\mathbf{S}_{C}\mathbf{E}_{S}+\mathbf{E}_{C}^{\prime}-\mathbf{E}_{C}\mathbf{S}_{S}-\mathbf{E}_{I}\mathbf{S}_{S}-\mathbf{E}_{S}^{\prime}\right\|\\
				=&\ \left\|8\mathbf{S}_{*}\mathbf{E}_{*}+4\mathbf{E}_{*}\right\|\\
				\leqslant&\ 12\left\|\mathbf{S}_{*}\mathbf{E}_{*}\right\|\\
				\leqslant&\ 12(\alpha q \sqrt{n})^{2}\\
				\leqslant&\ \frac{q}{4}-2\\
			\end{split}
		\end{aligned}
	\end{equation*}
\end{footnotesize}
That is there exists:
\begin{center}
\begin{footnotesize}
	$\operatorname{Pr}\left[\left\|\mathbf{M}_{c}-\mathbf{M}_{s}\right\| \geqslant \frac{q}{4}-2, [\mu_{i,j}^{k}]_{n \times n} \leftarrow \mathcal{D}_{\mathbb{Z}_{q}^{n \times n}, \tau} \right]<negl(\lambda)$.
\end{footnotesize}
\end{center}
According to Lemma \ref{lem1}, both parties obtain the equivalent key material via $\varphi(\cdot)$ with overwhelming probability.
\end{proof}

\subsection{Security}

The \textsf{\textmd{SRP}} protocol is widely used (e.g., \cite{srp-app-tls-Usingthesecure}, \cite{dhass-Ananalysisof}, etc.), and its security has been recognized in application. However, there is no strict formal security proof for it currently. In this section, we give strict security proof based on the scheme proposed by Bellare et al. [18] (\textsf{\textmd{BPR}} model) for the proposed \textsf{\textmd{LWE}}-based \textsf{\textmd{SRP}} protocol. Besides, we also present security analyses of our design considering quantum attacks complying with the original paper of \textsf{\textmd{SRP}} \cite{srp0}.

\subsubsection{Security model}: Firstly, we introduce the \textsf{\textmd{BPR}} security
model. (Please refer to the original paper [18] for the complete description.)

\begin{definition}{\rm (\textsf{BPR} security model \cite{BPR-Authenticatedkeyexchange})}
	 Instance $i$ of principal $U$ is called ``an oracle", where each principal $U \in I D$, denoted as $\Pi_{U}^{i}$. Queries that adversary $\mathcal{A}$ can make are as follows:
	\begin{itemize}
		\item {\rm \textsf{Send}$(U, i, M)$:} Sends message $M$ to oracle $\Pi_U^i$. This simulates a protocol instance generated by an adversary.
		\item {\rm \textsf{Reveal}$(U, i)$:} Returns the correct session key to the adversary from oracle $\Pi_U^i$. This simulates the loss of a session key.
		\item {\rm \textsf{Corrupt}$(U, p w)$:} The adversary obtains $pw_U$ and the states of all instances of $U$. This simulates a disastrous damage of the server.
		\item {\rm \textsf{Execute}$(C, i, S, j)$:} If there is no protocol being executed between the client and the server, instantiate a protocol with the input of this query.
		\item {\rm \textsf{Test}$(U, i)$:} Returns the session key holding by $\Pi^U_i$ or a random key according to the value of $b \leftarrow \{0, 1\}$. This is used to evaluate whether the attacker has successfully obtained the password or not.
	\end{itemize}
\end{definition}

\subsubsection{Game-based proof}
\begin{theorem} If the decisional-{\rm \textsf{LWE}} problem is hard, $\sigma(\cdot)$ and $\varphi(\cdot)$ are correct with overwhelming probability, {\rm \textsf{Gen}} is a secure hash function and {\rm \textsf{KDF}} is a secure key derivation function, then the proposed {\rm \textsf{LWE}}-based {\rm \textsf{SRP}} protocol is secure.
	\label{th3}
\end{theorem}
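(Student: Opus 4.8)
The plan is to give a sequence-of-games (game-hopping) argument in the \textsf{BPR} model, starting from the real execution $G_0$ and bounding the adversary $\mathcal{A}$'s advantage in answering the \textsf{Test} query by $\frac{n_{se}}{N} + negl(\lambda)$, where $n_{se}$ is the number of active \textsf{Send} queries and $N$ is the size of the password dictionary. This bound is exactly the hallmark of augmented \textsf{PAKE} security: the only feasible attack is online password guessing, at most one guess per active session. First I would define $G_0$ as the genuine protocol with every oracle query (\textsf{Send}, \textsf{Reveal}, \textsf{Corrupt}, \textsf{Execute}, \textsf{Test}) simulated honestly, and set $\mathrm{Adv}_0 = |\Pr[\mathcal{A}\ \text{wins}] - \tfrac{1}{2}|$. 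In $G_1$ I would abort on any collision of \textsf{Gen}; since \textsf{Gen} is a secure hash function, $|\mathrm{Adv}_0 - \mathrm{Adv}_1| \le negl(\lambda)$, and this guarantees that distinct $(id, salt, password)$ triples map to independent seeds $\gamma$, hence to independent $(\mathbf{S}_I, \mathbf{E}_I)$.

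Next, in $G_2$ I would treat all \emph{passive} sessions, namely those arising from \textsf{Execute} queries and from faithfully relayed \textsf{Send} queries. Here I invoke the Extended multi-instance decisional-\textsf{LWE} assumption: the pairs $(\mathbf{A}, \mathbf{V})$, $(\mathbf{A}, \mathbf{B}_C)$, $(\mathbf{A}, \mathbf{B}_S)$ together with the shared product $\mathbf{S}_I\mathbf{A}\mathbf{S}_S + \mathbf{S}_C\mathbf{A}\mathbf{S}_S$ are replaced by uniformly random matrices. By the indistinguishability in that definition (and Theorem~\ref{th1}) this costs only a negligible term per session. In $G_3$ I would then replace the \textsf{KDF} output on these now-uniform key materials with a truly random key, justified by \textsf{KDF} being a secure key derivation function; the \textsf{Test} query on any passive session is thereafter answered with a key independent of the challenge bit $b$, so passive sessions contribute nothing to $\mathcal{A}$'s advantage.

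The residual advantage can only come from \emph{active} \textsf{Send} queries in which $\mathcal{A}$ injects its own key material. Here I would use Theorem~\ref{th2}: the honest parties agree on $\varphi(\mathbf{M})$ precisely when the correct verifier $\mathbf{V}$ (equivalently the correct password-derived $\mathbf{S}_I$) is embedded. Since $\mathbf{V}$ statistically hides the password under decisional-\textsf{LWE}, each impersonation attempt effectively commits $\mathcal{A}$ to one password guess, succeeding with probability $1/N$. Summing over the $n_{se}$ active queries and collecting the negligible terms from $G_1$ through $G_3$ yields the claimed bound, establishing the theorem.

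The main obstacle will be the active-attack analysis, and in particular the interaction with \textsf{Corrupt}. I must show formally that a \textsf{Corrupt} query, which leaks the stored verifier $\mathbf{V}$ but not the password, does \emph{not} help $\mathcal{A}$ beyond offline brute force — i.e. that $\mathbf{V}$ is not password-equivalent. This requires arguing that recovering $\mathbf{S}_I$ from $\mathbf{V} = \mathbf{S}_I\mathbf{A} + 2\mathbf{E}_I$ is itself a hard \textsf{LWE} instance, so that the adversary is still forced to search the dictionary, and that session-key indistinguishability (the hop in $G_3$) persists even for sessions completed before corruption. Handling this cleanly, while ensuring the reduction embeds the decisional-\textsf{LWE} challenge consistently across many sessions sharing the same public $\mathbf{A}$, is where the delicate work lies.
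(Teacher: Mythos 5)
Your proposal is a legitimate game-hopping argument, but it takes a genuinely different route from the paper. The paper's proof is a linear sequence of nine games (\textsf{Game}$_0$ through \textsf{Game}$_8$) in which each protocol value --- the salt, $\mathbf{S}_I$, $\mathbf{E}_I$, $\mathbf{V}$, $\mathbf{B}_C$, the salt again, $\mathbf{B}_S$, and finally $\sigma$ --- is replaced in turn by a random value, with each hop costing a negligible term by the indistinguishability of \textsf{LWE} samples (Theorem~\ref{th1}) or the security of the hash function; it concludes that once everything is random the adversary's guess of $b$ is uninformed, so $\mathbf{Adv}[\mathcal{A}_0] \leq \text{negl}(\kappa)$. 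It makes no distinction between passive and active sessions, never introduces the dictionary size $N$ or the count of active \textsf{Send} queries, and does not analyze \textsf{Corrupt} at all. Your decomposition --- passive sessions killed by \textsf{LWE} plus \textsf{KDF}, active sessions bounded by one password guess each, yielding $n_{se}/N + \text{negl}(\lambda)$ --- is the standard and strictly more informative shape for a \textsf{BPR}-model \textsf{PAKE} proof: a fully negligible bound as the paper claims cannot literally hold for any password protocol over a polynomial-size dictionary, since online guessing always succeeds with probability $1/N$ per attempt, so your target bound is the correct one and the paper's stated conclusion is too strong. What the paper's approach buys is simplicity and a concrete, checkable list of hops; what yours buys is a bound that actually captures augmented-\textsf{PAKE} security, including the verifier-compromise guarantee. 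Be aware, though, that the hardest step you identify --- showing each active injection commits the adversary to a single password guess, and that \textsf{Corrupt} leaks only an \textsf{LWE}-hard instance --- is exactly the part you have deferred rather than carried out, so your proposal as written is a plan for a proof rather than a proof; the paper, for its part, simply does not attempt that step.
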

\begin{proof}The following security proof is based on the \textsf{BPR} model.
	\begin{itemize}
		\item \textsf{Game}$_{0}$: The real security game. Adversary $\mathcal{A}$ outputs a guess bit $b'$ for $b$ at the end of \textsf{Game}$_{0}$. Here $\mathbf{Adv}[\mathcal{A}_{0}] := \left| \Pr [b' = b] - \frac{1}{2} \right|$ denotes the advantage.
		

		\item {\rm \textsf{Game}$_{1}$ \textsf{:}} This game is based on \textsf{Game}$_{0}$ with the replacement of salt. The adversary $\mathcal{A}_1$ sends the query \textsf{Execute}$(C, i, S, j; \text{salt})$ to \textsf{Client}. The random seed $\gamma$ then is generated. Due to the indistinguishability of salt and the security of hash functions, $\mathcal{A}_1$ can distinguish between random seeds with the advantage $\mathbf{Adv}[\mathcal{A}_1] \leq \mathbf{Adv}[\mathcal{A}_0] + \epsilon_1(\kappa)$, where $\epsilon_1(\kappa)$ is negligible for adversary $\mathcal{A}_1$, then we get $\left| \Pr[\textsf{Game}_1] - \Pr[\textsf{Game}_0] \right| \leq \text{negl}(\kappa)$;
		
		\item {\rm \textsf{Game}$_{2}$ \textsf{:}}
		This game is based on \textsf{Game}$_{1}$ with the replacement of the initial secret $S_I$. The adversary $\mathcal{A}_2$ sends the query \textsf{Execute}$(C, i, S, j; S_I)$ to \textsf{Client}. $\mathcal{A}_2$ can distinguish between different LWE instances with the advantage $\mathbf{Adv}[\mathcal{A}_2] \leq \mathbf{Adv}[\mathcal{A}_1] + \epsilon_2(\kappa)$, where $\epsilon_2(\kappa)$ is negligible for adversary $\mathcal{A}_2$, then we get $\left| \Pr[\textsf{Game}_2] - \Pr[\textsf{Game}_1] \right| \leq \text{negl}(\kappa)$;
		
		\item {\rm \textsf{Game}$_{3}$ \textsf{:}}
		This game is based on \textsf{Game}$_{2}$ with the replacement of the initial noise $E_I$. The adversary $\mathcal{A}_3$ sends the query \textsf{Execute}$(C, i, S, j; E_I)$ to \textsf{Client}. $\mathcal{A}_3$ can distinguish between different LWE instances with the advantage $\mathbf{Adv}[\mathcal{A}_3] \leq \mathbf{Adv}[\mathcal{A}_2] + \epsilon_3(\kappa)$, where $\epsilon_3(\kappa)$ is negligible for adversary $\mathcal{A}_3$, then we get $\left| \Pr[\textsf{Game}_3] - \Pr[\textsf{Game}_2] \right| \leq \text{negl}(\kappa)$;
		
		\item {\rm \textsf{Game}$_{4}$ \textsf{:}}
		This game is based on \textsf{Game}$_{3}$ with the replacement of the verifier $V$. The adversary $\mathcal{A}_4$ sends the query \textsf{Send}$(S, j; V)$ to \textsf{Server}. $\mathcal{A}_4$ can distinguish between different LWE instances with the advantage $\mathbf{Adv}[\mathcal{A}_4] \leq \mathbf{Adv}[\mathcal{A}_3] + \epsilon_4(\kappa)$, where $\epsilon_4(\kappa)$ is negligible for adversary $\mathcal{A}_4$, then we get $\left| \Pr[\textsf{Game}_4] - \Pr[\textsf{Game}_3] \right| \leq \text{negl}(\kappa)$;
		
		\item {\rm \textsf{Game}$_{5}$ \textsf{:}}
		This game is based on \textsf{Game}$_{4}$ with the replacement of the noise $B_C$. The adversary $\mathcal{A}_5$ sends the query \textsf{Send}$(S, j; B_C)$ to \textsf{Server}. $\mathcal{A}_5$ can distinguish between different LWE instances with the advantage $\mathbf{Adv}[\mathcal{A}_5] \leq \mathbf{Adv}[\mathcal{A}_4] + \epsilon_5(\kappa)$, where $\epsilon_5(\kappa)$ is negligible for adversary $\mathcal{A}_5$, then we get $\left| \Pr[\textsf{Game}_5] - \Pr[\textsf{Game}_4] \right| \leq \text{negl}(\kappa)$;
		
		\item {\rm \textsf{Game}$_{6}$ \textsf{:}}
		This game is based on \textsf{Game}$_{5}$ with the replacement of salt. The adversary $\mathcal{A}_6$ sends the query \textsf{Send}$(C, i; \text{salt})$ to \textsf{Server}. Due to the indistinguishability of salt and the security of hash functions, $\mathcal{A}_4$ can distinguish between different random seeds with the advantage $\mathbf{Adv}[\mathcal{A}_6] \leq \mathbf{Adv}[\mathcal{A}_5] + \epsilon_6(\kappa)$, where $\epsilon_6(\kappa)$ is negligible for adversary $\mathcal{A}_6$, then we get $\left| \Pr[\textsf{Game}_6] - \Pr[\textsf{Game}_5] \right| \leq \text{negl}(\kappa)$;
		
		\item {\rm \textsf{Game}$_{7}$ \textsf{:}}
		This game is based on \textsf{Game}$_{6}$ with the replacement of the noise $B_S$. The adversary $\mathcal{A}_7$ sends the query \textsf{Send}$(C, i; B_S)$ to \textsf{Client}. $\mathcal{A}_6$ can distinguish between different LWE instances with the advantage $\mathbf{Adv}[\mathcal{A}_7] \leq \mathbf{Adv}[\mathcal{A}_6] + \epsilon_7(\kappa)$, where $\epsilon_7(\kappa)$ is negligible for adversary $\mathcal{A}_7$, then we get $\left| \Pr[\textsf{Game}_7] - \Pr[\textsf{Game}_6] \right| \leq \text{negl}(\kappa)$;
		
		\item {\rm \textsf{Game}$_{8}$ \textsf{:}}
		This game is based on \textsf{Game}$_{7}$ with the replacement of $\sigma$. The adversary $\mathcal{A}_8$ sends the query \textsf{Send}$(C, i; \sigma)$ to \textsf{Client} and $\sigma$ is sampled randomly from $\{0, 1\}^{n \times n}$. Due to the randomness of $\sigma$ and indistinguishability of the LWE instance $M_C \phi \Rightarrow S_{IAS}S + S_{CAS}S$ generated after calculating the key materials, $\mathcal{A}_8$ can distinguish between different LWE instances with the advantage $\mathbf{Adv}[\mathcal{A}_8] \leq \mathbf{Adv}[\mathcal{A}_7] + \epsilon_8(\kappa)$, where $\epsilon_8(\kappa)$ is negligible for adversary $\mathcal{A}_8$, then we get $\left| \Pr[\textsf{Game}_8] - \Pr[\textsf{Game}_7] \right| \leq \text{negl}(\kappa)$.
	\end{itemize}

In \textsf{Game}$_{8}$, all the parameters have been changed randomly. In this case, an adversary cannot distinguish between different LWE instances due to Theorem 1. Hence, the probability of adversary $\mathcal{A}_8$ winning \textsf{Game}$_{8}$ is $\Pr[\textsf{Game}_8] = \frac{1}{2}$, i.e., $\mathbf{Adv}[\mathcal{A}_8] = \frac{1}{2}$. Finally we get $\mathbf{Adv}[\mathcal{A}_0] = \left| \Pr[b' = b] - \frac{1}{2} \right| \leq \text{negl}(\kappa)$. The process shows an adversary cannot discover the password of the protocol under the given security model.
\end{proof}
\subsubsection{Analyses of the execution processes}
Based on the security descriptions of the original \textsf{\textmd{SRP}} \cite{srp0}, we give detailed security analyses of the execution process of the proposed \textsf{LWE}-based \textsf{SRP} protocol.

\begin{itemize}
	\item \textbf{Resistance to quantum attacks.} The scheme is designed from the \textsf{LWE} problem. An adversary can solve this hard problem if the security of the protocol can be broken. However, at present, no public algorithm has been designed to solve the hard problem in polynomial-time. Therefore, our scheme overcomes the disadvantage that the original protocol cannot resist quantum attacks, and has high security.

	\item \textbf{No recovery of the password or a session key.} Firstly, the probability of recovering the password or a session key is negligible. This is guaranteed via the indistinguishability of the decisional-\textsf{LWE} problem; Second, the information associated with the password is perfectly hidden. The client first needs to generate the random seed with the participation of the password and use this seed to construct the initial verifier, an \textsf{LWE} instance. This means that an attacker must break the security of hash functions and solve the \textsf{LWE} problem if he intends to recover the password. However, both are impossible.
	
	\item \textbf{Security without dependence on the verifier.} Even if the verifier is leaked, the adversary still cannot obtain the password, and the most he can do is to block the protocol from continuing to run. Specifically, the verifier is essentially an \textsf{LWE} instance, and according to the protocol, the adversary must know the password and generate the corresponding secret value and noise value to generate the session key after computing the key material. Therefore, the adversary cannot imitate either party in the protocol with a stolen verifier.
	
	\item \textbf{Independence among session keys} When the adversary obtains a session key via some method, he cannot infer the previous key or the future key. In other words, considering the worst case, when the server is attacked by an attacker, the protocol can also ensure the security of historical session keys. $Client$ and $Server$ needs to randomly generate secrets for different sessions. Therefore, the session key obtained by each key exchange process depends on the independent \textsf{LWE} problem instances. Due to the difficulty of the decisional-\textsf{LWE} problem, the security of different session keys is guaranteed. Even a more extreme situation can be considered: The client's password itself is compromised additionally. However, the intruder still cannot know the previous session key and decrypt the previous ciphertext in this case due to similar reasons.
	
\end{itemize}

The above analyses cover various situations involving verifier, password, and session key. Combined with strict security proofs, it is shown that the protocol proposed in this paper guarantees security.

\section{Conclusions}\label{Conclusions}

In this paper, we propose an \textsf{LWE}-based SRP protocol based on the conception of the extended multi-instance \textsf{LWE} problem. It mainly has the following advantages: 1). It only depends on the basic \textsf{LWE} construction and is resistant to quantum attacks; 2). The password is not exposed during the process via a verifier; 3). Even if the verifier is leaked, an attacker cannot recover the password from it or get a session key by impersonating the client. We provide strict proof and detailed analyses of the correctness and security of the protocol. The protocol can generate a session shared keys after running successfully and it is resistant to various types of attacks in different scenarios. 

Our design can be optimized from the following aspects:
1). Construction of leaner key material. Although we constructed the correct and secure key material from the most
intuitive and efficient point of view, more sophisticated and
constructive methods still deserve further exploration. 2). More
efficient calculation method. In the calculation process of the
key material, we adopt a relatively simple calculation and
simplification method. Therefore, designing a more efficient
computing process can be considered to improve the protocol.
more parties. 3). Multi parities. The protocol in this paper is
constructed in the scene of a client and a server. It can also
be reconstructed to build an extended identity-based scheme
running among multi parities based on this work. And we
leave these works for future research.

\section*{ACKNOWLEDGMENT}
This work was funded by R\&D Program of Beijing Municipal Education Commission with grant number 110052971921/021.

\bibliographystyle{ieeetr}
\bibliography{references}

\end{document}